\documentclass[aps,prl,reprint,superscriptaddress,amsmath,amssymb,longbibliography]{revtex4-2}
\usepackage[T1]{fontenc}
\usepackage{lmodern}
\usepackage{mathtools}
\usepackage{amsthm}
\usepackage{graphicx}
\usepackage{microtype}
\usepackage{hyperref}
\usepackage{xcolor}

\newcommand{\D}{\mathcal D}
\newcommand{\F}{\mathcal F}
\newcommand{\Ocal}{\mathcal O}
\newcommand{\supp}{\operatorname{supp}}
\newcommand{\spn}{\operatorname{span}}
\newcommand{\ket}[1]{\lvert #1\rangle}
\newcommand{\bra}[1]{\langle #1\rvert}
\newcommand{\norm}[1]{\lVert #1\rVert}
\newcommand{\Tr}{\operatorname{Tr}}

\newcommand{\xto}{\longrightarrow}
\newcommand{\notto}{\nrightarrow}

\newtheorem{theorem}{Theorem}
\newtheorem{lemma}{Lemma}

\hypersetup{colorlinks=true,
 linkcolor=magenta,
 citecolor=magenta,
 urlcolor=magenta,
 filecolor=magenta}

\begin{document}
\title{Total Ordering of Resources: A Quantum--Classical Separation}
\author{Jingsong Ao}
\affiliation{Institute of Fundamental Technological Research, Polish Academy of Sciences, Pawi\'nskiego 5B, 02-106 Warsaw, Poland}
\author{Julio I. de Vicente}
\affiliation{Departamento de Matem\'aticas, Universidad Carlos III de Madrid, E-28911, Legan\'es (Madrid), Spain}
\affiliation{Instituto de Ciencias Matem\'aticas (ICMAT), E-28049 Madrid, Spain}
\author{Alexander Streltsov}
\affiliation{Institute of Fundamental Technological Research, Polish Academy of Sciences, Pawi\'nskiego 5B, 02-106 Warsaw, Poland}

\begin{abstract}

A central goal in quantum information science is understanding which physical states can be converted into others. The ideal scenario would be a totally ordered system, where every state can be ranked by a single value, and a higher-ranked state can always be converted into a lower-ranked one. Such an ordering would provide a complete description of state convertibility, making the resource structure maximally simple to characterize and compare. We prove that such a simple ordering is fundamentally impossible for any quantum system with dimension larger than~2. The underlying properties of quantum mechanics guarantee that some states will always be incomparable. Furthermore, we show this is a uniquely quantum phenomenon by constructing classical systems of any size that can be perfectly ranked. This establishes a rigorous and striking divide between quantum and classical state transformations.
\end{abstract}
\maketitle

\emph{Introduction.---}Entanglement is one of the defining resources of quantum information: shared entangled states enable information-processing tasks and advantages that cannot be achieved without entanglement under the same operational restrictions~\cite{HorodeckiReview,BennettTeleportation}. This perspective has developed into the broader framework of quantum resource theories, in which one specifies states that are available for free and transformations that cannot generate the resource~\cite{ChitambarGour,CoeckeFritzSpekkens}. The same language encompasses resources as diverse as entanglement~\cite{HorodeckiReview}, coherence~\cite{Baumgratz2014,CoherenceReview}, purity~\cite{Horodecki2003,GourNonequilibrium}, and quantum thermodynamics~\cite{Brandao2015}, allowing structural questions about quantum advantage to be studied independently of a particular physical implementation.

A central question in any quantum resource theory is state convertibility: given two states, can one be transformed into the other using only free operations? Complete conversion criteria have been formulated using families of entropic monotones~\cite{Gour2017,Gour2018} and operational quantities based on discrimination tasks~\cite{TakagiRegula2019} or work extraction~\cite{HsiehGessner2026}. In many resource theories some pairs of states are incomparable, meaning that neither state can be converted into the other by free operations. For bipartite pure-state entanglement, for example, deterministic transformations under local operations and classical communication are governed by majorization, which already in local dimension $d\ge3$ gives rise to incomparable pairs~\cite{Nielsen,Vidal}. 

The opposite extreme is a totally ordered resource theory, in which every pair of states is freely convertible in at least one direction. Such a theory admits a particularly simple description of resource conversion, because a single quantity can capture the entire conversion structure~\cite{Datta} therefore defining resource value univocally. On the other hand, resource theories with incomparable states give rise to a complicated zoo of different monotones, which reflects the fact that resource value can become task-dependent according to the different operational interpretations different monotones might have.

Incomparability also occurs for pure-state coherence under incoherent operations~\cite{Zhu2017} and for purity~\cite{Horodecki2003,GourNonequilibrium}, whose conversion criteria likewise involve majorization. At the same time, nontrivial totally ordered quantum resource theories are known to exist for qubits, while whether they can exist in dimensions $d\ge3$ was left open in~\cite{Datta}. The presence of incomparable states in commonly used quantum resource theories does not by itself show that incomparability is an intrinsic feature of quantum state space: it could instead arise from a particular choice of free states or free operations. 

In this letter we prove that for every nontrivial quantum resource theory total ordering is impossible in dimension $d\ge3$, even when one allows the maximal set of free-state-preserving quantum channels. In contrast, we construct nontrivial classical resource theories that are totally ordered in every finite dimension $d\ge2$. This means that the nonexistence of total order is a purely quantum phenomenon.

\medskip

\emph{Totally ordered quantum resource theories.---}Let $\D_d$ be the density operators on $\mathbb C^d$. Fix a nonempty compact convex set $\F\subsetneq\D_d$ and define
\begin{equation}
 \Ocal_{\F}:=\{\Lambda\text{ CPTP}:\Lambda(\F)\subseteq\F\}.
 \label{eq:maximal}
\end{equation}
The elements of $\F$ are the \emph{free states}, available without consuming the resource; states outside $\F$ are resource states. We take $\Ocal_{\F}$ to be the \emph{free operations}, namely all channels that cannot create a resource state from a free input~\cite{ChitambarGour}. For bipartite entanglement, $\F$ consists of separable states, i.e., mixtures of product states, and $\Ocal_{\F}$ is the set of separability-preserving channels, which contains local operations and classical communication~\cite{HorodeckiReview,ChitambarGour}. For coherence in a fixed reference basis, $\F$ consists of diagonal density operators and $\Ocal_{\F}$ is the set of maximally incoherent operations~\cite{CoherenceReview,Aberg2006,ChitambarGourCoherence}. For purity, $\F=\{I_d/d\}$ and $\Ocal_{\F}$ consists of unital channels~\cite{GourNonequilibrium}. These maximal sets need not coincide with the more restrictive operations motivated by a particular experimental setting. We write $\rho\xto\sigma$ if $\Lambda(\rho)=\sigma$ for some $\Lambda\in\Ocal_{\F}$. 

Nonemptiness and properness ensure that both free and resource states exist. Convexity assumes that probabilistic mixtures of free preparations remain free. Closedness includes limits of free preparations, so a state approximable arbitrarily well by free states is itself free. Since $\D_d$ is compact, closedness of $\F$ is equivalent to compactness. These assumptions are standard in the framework of quantum resource theories~\cite{ChitambarGour,Datta}.

We call a resource theory \emph{totally ordered} if, for every pair $\rho,\sigma\in\D_d$, at least one of the conversions $\rho\xto\sigma$ or $\sigma\xto\rho$ is possible~\cite{Datta}. Equivalently, the conversion preorder has no incomparable pairs. Totality is particularly restrictive because it means that the complete convertibility structure can be represented by a single monotone: states can be placed on one resource scale rather than requiring several independent monotones. A resource monotone is a real-valued function $M$ that cannot increase under free operations, i.e., $\rho\xto\sigma$ implies $M(\rho)\ge M(\sigma)$. Nontrivial totally ordered quantum resource theories do exist for $d=2$; for example, qubit purity with free state $I_2/2$ and unital channels is totally ordered by the Bloch-vector length~\cite{Datta}.

\medskip

\emph{No total order beyond qubits.---}Our aim is to show that, for every admissible set of free states $\F$ in dimension $d\ge3$, some pair of states admits neither conversion even under the maximal choice of free operations in Eq.~\eqref{eq:maximal}. Enlarging a set of free operations can only add conversions. An incomparable pair under $\Ocal_{\F}$ therefore remains incomparable under every smaller set preserving the same $\F$. The obstruction is consequently independent of any additional operational restriction; it is not merely the failure of totality in a particular example.

\begin{theorem}\label{thm:quantum}
For every $d\ge3$ and every nonempty compact convex $\F\subsetneq\D_d$, there exist $\rho,\sigma\in\D_d$ such that neither $\rho\xto\sigma$ nor $\sigma\xto\rho$ is possible under $\Ocal_{\F}$. Consequently, no nontrivial quantum resource theory is totally ordered for $d\ge3$.
\end{theorem}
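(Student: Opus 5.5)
The plan is to exhibit two resource monotones for $\Ocal_{\F}$ that order some pair of states in opposite directions. Both come from data processing. For any $\Lambda\in\Ocal_{\F}$ and any $f\in\F$ we have $\Lambda(f)\in\F$, so any divergence-based quantity $M(\rho)=\min_{f\in\F}D(\rho\Vert f)$ satisfying the data-processing inequality is non-increasing under $\Ocal_{\F}$. I will use two such quantities. The first is the max-relative (generalized robustness) monotone
\[
D_{\max}(\rho):=\min_{f\in\F}\log\min\{\lambda:\rho\le\lambda f\}.
\]
The second is the min-relative (support) monotone
\[
D_{\min}(\rho):=\min_{f\in\F}\bigl(-\log\Tr(\Pi_\rho f)\bigr),
\]
where $\Pi_\rho$ is the projector onto $\supp\rho$. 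Compactness of $\F$ guarantees that both minima are attained.

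\emph{Support obstruction.} If $\rho$ has full rank, then $\Pi_\rho=I_d$, so $D_{\min}(\rho)=0$. If $\varphi=\ket{\varphi}\bra{\varphi}\notin\F$ is pure, then $D_{\min}(\varphi)=-\log\max_{f\in\F}\bra{\varphi}f\ket{\varphi}>0$, because fidelity $1$ with some $f$ would force $f=\varphi\in\F$. Hence no full-rank state can be converted into a pure resource state. Such pure states exist: $\F\subsetneq\D_d$ is convex and closed, so it cannot contain every pure state.

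\emph{Robustness obstruction.} It then suffices to find a pure resource state $\varphi$ and a full-rank state $\rho$ with $D_{\max}(\rho)>D_{\max}(\varphi)$, which rules out $\varphi\xto\rho$. I will take $\rho=(1-\epsilon)\psi+\epsilon I_d/d$, where $\psi$ is a pure state maximizing $D_{\max}$. Continuity of $D_{\max}$ on full-rank perturbations (again by compactness of $\F$) then reduces the task to a single claim: $D_{\max}$ is not constant on the set of pure states outside $\F$.

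If $\F$ contains a pure state $\chi$, this claim is easy. Pure resource states arbitrarily close to $\chi$ have $D_{\max}$ arbitrarily close to $0$, while the maximizer $\psi$ has $D_{\max}(\psi)>0$.

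\emph{Main obstacle: the case where $\F$ contains no pure states.} This is where $d\ge3$ must enter, since qubit purity ($\F=\{I_2/2\}$, constant robustness on the Bloch sphere) is a genuine counterexample when $d=2$. I would argue by contradiction. Suppose $\lambda(\psi):=2^{D_{\max}(\psi)}=\min_{f\in\F}\max\{\lambda:\psi\le\lambda f\}$ equals a constant $c$ for all pure $\psi$. For pure $\psi$ the condition $\psi\le\lambda f$ is equivalent to $\bra{\psi}f^{-1}\ket{\psi}\le\lambda$ (on the support of $f$). So the assumption says that $\max_{f\in\F}\bigl(\bra{\psi}f^{-1}\ket{\psi}\bigr)^{-1}=1/c$ is the same for every unit vector $\psi$. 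Averaging over $\psi$ and using $\Tr f^{-1}\ge d^2$, with equality only at $f=I_d/d$, should force $I_d/d\in\F$ and $c=d$. A second-order analysis on the pure-state manifold, which has real dimension $2(d-1)\ge4$ when $d\ge3$, should then force $\F=\{I_d/d\}$: near $I_d/d$, any additional free state $f$ tilts $\bra{\psi}f^{-1}\ket{\psi}$ anisotropically in $\psi$.

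This remaining case is exactly unital purity, and I will settle it directly. Conversions there are governed by majorization of spectra, so for $d\ge3$ the spectra $(1/2,1/2,0,\dots)$ and $(2/3,1/6,1/6,0,\dots)$ give an incomparable pair. I expect this rigidity step, showing that constant robustness on pure states forces $\F=\{I_d/d\}$, to be the technically delicate part. If it resists, my fallback is to replace $D_{\max}$ by a second data-processing monotone, such as $\min_{f\in\F}D(\cdot\Vert f)$ with the Umegaki relative entropy, whose non-constancy on pure states is easier to control.
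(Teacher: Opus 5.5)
\textbf{There is a genuine gap, in exactly the case you flag as the main obstacle.} Your rigidity claim, that constant $D_{\max}$ on pure states forces $\F=\{I_d/d\}$, is false.

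Take any unitarily invariant free set, e.g.\ $\F=\{\rho:\norm{\rho-I_d/d}_1\le t\}$ with small $t>0$, or $\F=\{\rho:\lambda_{\max}(\rho)\le x\}$ with $1/d<x<1$. Neither contains a pure state.
\begin{itemize}
\item Every unitary channel lies in $\Ocal_{\F}$ and is invertible within it. So \emph{every} monotone is constant on pure states, including your relative-entropy fallback.
\item By convexity and invariance, all random-unitary channels are free. A pure spectrum majorizes every spectrum, so a pure state converts into \emph{every} state. For these $\F$, no incomparable pair can contain a pure state, and your pure-versus-full-rank scheme cannot work with any choice of monotones.
\item For $t>0$, $\Ocal_{\F}$ is strictly larger than the unital channels. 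For instance, $\rho\mapsto(1-s)\rho+s\,\omega$ with $\omega\in\F\setminus\{I_d/d\}$ is free and non-unital. So your majorization argument for purity does not transfer.
\item Your averaging step runs the wrong way. Constancy gives $c=\mathbb E_\psi\min_f\bra\psi f^{-1}\ket\psi\le\min_f\Tr f^{-1}/d$. Combined with $\Tr f^{-1}\ge d^2$, this constrains nothing. For the ball one actually has $c=(1/d+t/2)^{-1}\ne d$.
\end{itemize}

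Two smaller issues. First, if $\F$ contains a pure $\chi$ but no full-rank state (e.g.\ $\F=\D_S$ for a proper subspace $S$), then $D_{\max}=+\infty$ on every resource state, full-rank ones included. The inequality $D_{\max}(\rho)>D_{\max}(\varphi)$ is then unobtainable, and you would need a different monotone such as $\Tr(P_S\,\cdot)$. Second, $\lambda(\psi)$ should be defined with $\min\{\lambda:\psi\le\lambda f\}$, not $\max$.

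\textbf{What is missing} is a mechanism for incomparability when all pure states are equivalent, which is precisely the situation totality forces. The paper does not look for monotones that separate pure states. It argues by contradiction:
\begin{itemize}
\item Totality makes all pure states interconvertible, which yields the constant $x$ of Eq.~\eqref{eq:saturation}.
\item The support argument and compactness of $\Ocal_{\F}$ let it restrict the theory to subspaces (Lemma~\ref{lem:restriction}). The qubit classification then gives the dichotomy $R(P_W/2)\in\{0,r\}$ on two-dimensional subspaces.
\item Using nearly pure \emph{mixed} pairs $\rho_\varepsilon,\tau_\varepsilon$, it obtains a rank-two free state (Lemma~\ref{lem:ranktwo}). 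In $d=3$ it then builds one two-dimensional subspace containing free states and one containing none.
\item Continuity of $R$ along a path of subspaces joining them gives the contradiction, and $d>3$ reduces to $d=3$.
\end{itemize}
Your proposal needs an argument of this kind, exploiting the geometry of mixed states in $d\ge3$, in place of the rigidity step.
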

\noindent The situation is illustrated in Fig.~\ref{fig:incomparability}.

\begin{figure}[t]
\centering
\includegraphics[width=\columnwidth]{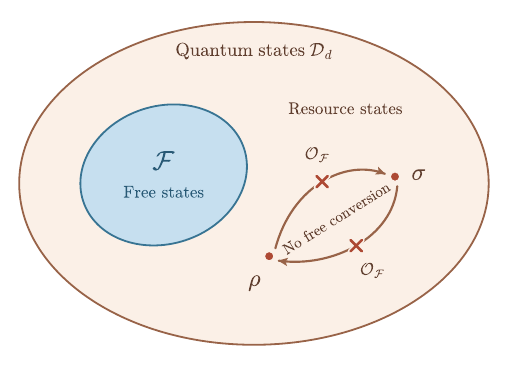}
\caption{Schematic representation of quantum incomparability. The free states form a proper subset $\F\subsetneq\D_d$. For $d\ge3$, Theorem~\ref{thm:quantum} guarantees resource states $\rho$ and $\sigma$ for which neither conversion direction is possible under the maximal free-operation set $\Ocal_{\F}$.}
\label{fig:incomparability}
\end{figure}

In particular, there is no resource monotone $M:\D_d\to\mathbb R$ satisfying
\begin{equation}
 \rho\xto\sigma\quad\Longleftrightarrow\quad M(\rho)\ge M(\sigma)
 \label{eq:complete}
\end{equation}
for all states. This conclusion does not require $M$ to be continuous, convex, or to satisfy any other regularity condition.

Suppose, towards a contradiction, that $\Ocal_{\F}$ totally orders $\D_d$, with $d\ge3$. We use two structural facts established in Ref.~\cite{Datta}: all pure states are freely interconvertible, and a nontrivial totally ordered qubit theory has free-state set
\begin{equation}
 \F_t=\{\rho\in\D_2:\norm{\rho-I_2/2}_1\le t\},\qquad 0\le t<1.
 \label{eq:qubit}
\end{equation}
Here $\norm{A}_1:=\Tr\sqrt{A^\dagger A}$ denotes the trace norm.
No pure state can therefore be free, since convexity would then make every state free. The continuous convex monotone
\begin{equation}
 R(\rho):=\min_{\mu\in\F}\norm{\rho-\mu}_1
 \label{eq:distance}
\end{equation}
has a common value $R(\psi)=r>0$ on all pure states. Here and in the following, given a unit-norm vector $|\psi\rangle$, we write $\psi=\ket\psi\!\bra\psi$ for the corresponding pure-state density matrix. Similarly, monotonicity of $\max_{\mu\in\F}F(\rho,\mu)$, with squared fidelity $F(\rho,\mu)=\norm{\sqrt\rho\sqrt\mu}_1^2$, implies that a constant $0<x<1$ satisfies
\begin{equation}
 \max_{\mu\in\F}\bra\psi\mu\ket\psi=x
 \qquad\text{for every pure }\psi.
 \label{eq:saturation}
\end{equation}
This implies $\bra\psi\mu\ket\psi\le x$ for every pure $\psi$ and every free state $\mu\in\F$, and hence
\begin{equation}
 \lambda_{\max}(\mu)\le x\qquad \forall \mu\in\F.
\end{equation}
Here $\lambda_{\max}(\mu)$ denotes the largest eigenvalue of $\mu$. In what follows we will also use that $\Ocal_{\F}$ is compact, which follows from the compactness of $\F$.

For any subspace $X\subseteq\mathbb C^d$, let $P_X$ be the projector onto $X$ and define $\D_X:=\{\rho\in\D_d:\supp\rho\subseteq X\}$ and $\F_X:=\F\cap\D_X$. Two consequences of totality drive the proof, which are presented in Lemmas~\ref{lem:restriction} and \ref{lem:ranktwo}.

\begin{lemma}\label{lem:restriction}
Assume the nontrivial quantum resource theory is totally ordered and let $W$ be a subspace. If $\F_W\ne\varnothing$, the restrictions of global free channels mapping $\D_W$ into itself define a nontrivial totally ordered theory on $\D_W$. Moreover, if $\dim W=2$, then
\begin{equation}
 R(P_W/2)=\begin{cases}
 0,&\text{if }\F_W\ne\varnothing,\\
 r>0,&\text{if }\F_W=\varnothing.
 \end{cases}
 \label{eq:dichotomy}
\end{equation}
\end{lemma}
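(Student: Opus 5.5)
The plan splits the lemma into two parts: first the restricted theory on $\D_W$, then the dichotomy \eqref{eq:dichotomy} for $\dim W=2$.

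\emph{Restricted theory.} Let $\Ocal_W$ be the set of restrictions $\Lambda|_{\D_W}$ of channels $\Lambda\in\Ocal_{\F}$ with $\Lambda(\D_W)\subseteq\D_W$.
\begin{itemize}
\item Each such restriction maps $\F_W=\F\cap\D_W$ into $\F\cap\D_W=\F_W$, so $\Ocal_W$ preserves the free set.
\item $\F_W$ is nonempty by assumption. It is compact and convex as the intersection of two compact convex sets.
\item $\F_W\ne\D_W$, because $\D_W$ contains pure states and, as shown above, no pure state is free. Hence the restricted theory is nontrivial.
\end{itemize}

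The real content is totality. Given $\rho,\sigma\in\D_W$, global totality gives some $\Lambda\in\Ocal_{\F}$ with, say, $\Lambda(\rho)=\sigma$. However, $\Lambda$ need not keep the rest of $\D_W$ inside $\D_W$. I would repair this by post-composing with a ``compress-or-reset'' channel built from a fixed $\tau\in\F_W$:
\[
E(Y)=P_WYP_W+\Tr\!\big[(I-P_W)Y\big]\,\tau .
\]
Then $E\circ\Lambda$ still maps $\rho\mapsto\sigma$, since $\supp\sigma\subseteq W$, and it maps all of $\D_W$ into $\D_W$. The main obstacle is showing $E\in\Ocal_{\F}$, i.e.\ that $P_W\mu P_W+(1-\Tr P_W\mu)\tau\in\F$ for every $\mu\in\F$. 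This is not automatic.

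If it cannot be shown directly, my fallback uses two ingredients. First, replacement channels $X\mapsto\Tr(X)\tau$ with $\tau\in\F_W$ are always free. Second, $\Ocal_{\F}$ is compact, as noted above. I would try to realize the required map as a limit or convex combination of free channels that fix $\sigma$ while pushing $W^\perp$-weight back into $W$. If that also fails, I would define the restricted theory through the \emph{induced} conversion relation on $\D_W$ and check directly that the structural results of Ref.~\cite{Datta} used below still apply to it.

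\emph{The case $\dim W=2$, $\F_W\ne\varnothing$.} By the first part, $\D_W\cong\D_2$ carries a nontrivial totally ordered theory. By \eqref{eq:qubit}, its free set is a trace-norm ball $\F_t$ centred at the maximally mixed state, here $P_W/2$. In particular $P_W/2\in\F_W\subseteq\F$, so $R(P_W/2)=0$.

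\emph{The case $\dim W=2$, $\F_W=\varnothing$.}
\begin{itemize}
\item Upper bound. Fix a pure $\psi$ with support in $W$, so $R(\psi)=r$. Totality gives either $P_W/2\xto\psi$, whence $R(P_W/2)\ge r$ by monotonicity, or $\psi\xto P_W/2$, whence $R(P_W/2)\le r$. Independently of totality, $P_W/2$ is an average of pure states, and convexity of $R$ gives $R(P_W/2)\le r$. So the upper bound always holds.
\item Lower bound. It remains to show $R(P_W/2)\ge r$, and this is the second delicate point. Since $\F_W=\varnothing$, the state $P_W/2$ is not free, so $R(P_W/2)>0$. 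My plan is to exhibit a free channel $P_W/2\xto\psi$, which would exclude the strict inequality $R(P_W/2)<r$. I would try a unitary twirl on $W$ combined with the interconvertibility of all pure states. Alternatively, I would argue by contradiction: if $0<R(P_W/2)<r$, then mixtures $p\psi+(1-p)P_W/2$ sweep out intermediate values of $R$. Totality of the global theory together with the saturation property \eqref{eq:saturation} should then produce a free state supported in $W$, contradicting $\F_W=\varnothing$.
\end{itemize}
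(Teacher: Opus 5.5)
Your treatment of nontriviality, of the case $\F_W\ne\varnothing$ (granted totality of the restricted theory), and of the upper bound $R(P_W/2)\le r$ via convexity is fine and matches the paper. The two steps you flag as delicate, however, are genuine gaps. The paper closes both with one idea your proposal never uses: positivity plus full rank. If $\rho\in\D_W$ has full rank on $W$, with smallest eigenvalue $m>0$ there, then every $\omega\in\D_W$ obeys $\omega\le m^{-1}\rho$. Hence $\Lambda(\omega)\le m^{-1}\Lambda(\rho)$ for any positive trace-preserving $\Lambda$, so $\supp\Lambda(\rho)\subseteq V$ already forces $\Lambda(\D_W)\subseteq\D_V$.

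\emph{Totality of the restricted theory.} Your channel $E$ is not free in general, because compressing a free state onto $W$ can create resource. For example, take $\F$ to be the trace-norm ball of radius $2/3$ about $I_3/3$. Then $\F_W\ne\varnothing$, and for every $\tau\in\F_W$ the map $E$ sends the free state $\frac23 v+\frac13 f$ (with $\ket v$ the top eigenvector of $\tau$ and $\ket f\perp W$) to a state with largest eigenvalue $\ge 5/6>2/3$, which lies outside $\F$. Nothing in your plan uses totality to repair this. The fallbacks do not close the gap either. The ``limit or convex combination'' is unspecified. The induced conversion relation on $\D_W$ is not implemented by CPTP maps on $W$ preserving $\F_W$, which is exactly what the qubit classification~\eqref{eq:qubit} needs. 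The paper instead regularizes: set $\rho_n=(1-\varepsilon_n)\rho+\varepsilon_n P_W/k$ and likewise $\sigma_n$. Totality gives a conversion between them for each $n$, with one direction occurring infinitely often, say $\Lambda_n(\rho_n)=\sigma_n$. Since the input is full rank on $W$ and the output is supported in $W$, the support argument gives $\Lambda_n(\D_W)\subseteq\D_W$ automatically. Compactness of $\Ocal_{\F}$ then yields a limit $\Lambda_0\in\Ocal_{\F}$ with $\Lambda_0(\D_W)\subseteq\D_W$ and $\Lambda_0(\rho)=\sigma$.

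\emph{Lower bound $R(P_W/2)\ge r$ when $\F_W=\varnothing$.} Neither the twirl idea nor the sweep-out contradiction is an argument. Unitaries on $W$ need not be free, and a twirl moves $\psi$ toward $P_W/2$, which is the wrong direction. As you noticed, totality applied to the pair $(\psi,P_W/2)$ gives nothing, since $\psi\xto P_W/2$ is not excluded. The missing step is to replace $\psi$ by the full-rank state $\gamma_\eta=(1-\eta)\psi+\eta I_d/d$. Suppose a free $\Lambda$ had $\Lambda(\gamma_\eta)=P_W/2$. The support argument, with input space $\mathbb C^d$ and $V=W$, would then send every state into $\D_W$, including every free state, hence into $\F\cap\D_W=\varnothing$, which is impossible. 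Totality therefore forces $P_W/2\xto\gamma_\eta$. Monotonicity and continuity of $R$ then give $R(P_W/2)\ge\lim_{\eta\downarrow0}R(\gamma_\eta)=r$.
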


\begin{proof}
Let $\rho\in\D_W$ be full rank on $W$, let $\Lambda$ be positive and trace preserving, let $V\subseteq\mathbb C^d$ be a subspace, and suppose $\supp\Lambda(\rho)\subseteq V$. Writing $m=\lambda_{\min}(\rho|_W)>0$, where $\rho|_W$ denotes the restriction of $\rho$ to $W$ and $\lambda_{\min}$ its smallest eigenvalue, every $\omega\in\D_W$ satisfies
\begin{equation}
 0\le\omega\le P_W\le m^{-1}\rho,
 \qquad 0\le\Lambda(\omega)\le m^{-1}\Lambda(\rho).
 \label{eq:supportlemma}
\end{equation}
Hence $\Lambda(\omega)$ is supported on $V$, and trace preservation gives $\Lambda(\D_W)\subseteq\D_V$.

Now suppose $\F_W\ne\varnothing$. Let $\Ocal_W$ consist of the restricted channels $\Lambda|_W$ obtained from all $\Lambda\in\Ocal_{\F}$ satisfying $\Lambda(\D_W)\subseteq\D_W$. Here $\Lambda|_W$ means the same linear map $\Lambda$ with its input restricted to operators supported on $W$; the condition $\Lambda(\D_W)\subseteq\D_W$ ensures that it is a well-defined CPTP map on the Hilbert space $W$. These restricted channels preserve $\F_W$. They contain the identity and the replacement channels with output in $\F_W$, and are closed under composition. To establish totality, take arbitrary $\rho,\sigma\in\D_W$, put $k=\dim W$, and construct the following states:
\begin{equation}
 \rho_n=(1-\varepsilon_n)\rho+\varepsilon_n P_W/k,\quad
 \sigma_n=(1-\varepsilon_n)\sigma+\varepsilon_n P_W/k,
 \label{eq:regularization}
\end{equation}
with $\varepsilon_n\downarrow0$. Totality supplies a conversion in at least one direction for each $n$. One direction occurs infinitely often; along this subsequence suppose $\Lambda_n(\rho_n)=\sigma_n$. Since $\rho_n$ is full rank on $W$, Eq.~\eqref{eq:supportlemma} implies $\Lambda_n(\D_W)\subseteq\D_W$.

Compactness of $\Ocal_{\F}$ gives a further subsequence with $\Lambda_n\to\Lambda_0\in\Ocal_{\F}$. Closedness of $\D_W$ and continuity yield $\Lambda_0(\D_W)\subseteq\D_W$ and $\Lambda_0(\rho)=\sigma$. Thus the restricted channel $\Lambda_{0,W}:=\Lambda_0|_W$ belongs to $\Ocal_W$. If the other conversion direction occurs infinitely often, the same reasoning gives $\sigma\xto\rho$ within $W$. The restricted theory is nontrivial because the original theory has no free pure states.

The preceding argument therefore allows us to invoke the qubit classification whenever $\dim W=2$ and $\F_W\ne\varnothing$: in that case the global theory induces a nontrivial totally ordered qubit theory on $\D_W$. If instead $\F_W=\varnothing$, there is no free state supported on $W$, so this restricted qubit theory is not available and that case must be treated separately.

For $\dim W=2$ and $\F_W\ne\varnothing$, the qubit characterization~\eqref{eq:qubit} gives $P_W/2\in\F_W$, hence $R(P_W/2)=0$. If $\F_W=\varnothing$, choose any pure $\psi$ and define the full-rank state
\begin{equation}
 \gamma_\eta=(1-\eta)\psi+\eta I_d/d,\qquad 0<\eta<1.
 \label{eq:gamma}
\end{equation}
A free conversion $\gamma_\eta\xto P_W/2$ would, by~\eqref{eq:supportlemma}, send the entire state space into $\D_W$. In particular it would send a free state into the empty set $\F\cap\D_W$, a contradiction. Totality therefore implies $P_W/2\xto\gamma_\eta$. Monotonicity and continuity give
\begin{equation}
 R(P_W/2)\ge\lim_{\eta\downarrow0}R(\gamma_\eta)=r.
 \label{eq:dichotomylower}
\end{equation}
If $w_1,w_2$ are orthogonal pure states spanning $W$, convexity gives the reverse bound:
\begin{equation}
 R(P_W/2)=R\bigl((w_1+w_2)/2\bigr)
 \le\bigl(R(w_1)+R(w_2)\bigr)/2=r.
 \label{eq:dichotomyupper}
\end{equation}
This proves Lemma~\ref{lem:restriction}.
\end{proof}

Lemma~\ref{lem:restriction} shows that totality imposes a sharp dichotomy on every two-dimensional subspace, depending on whether it contains a free state. To exploit this dichotomy in the qutrit argument, we also need a structural consequence of totality that guarantees the existence of suitable low-rank free states. The next lemma provides precisely this ingredient by producing a rank-two free state through any prescribed pure state.

\begin{lemma}\label{lem:ranktwo}
Assume $d\ge2$, $\F\subsetneq\D_d$ is nonempty, compact, and convex, and $\Ocal_{\F}$ totally orders $\D_d$. Let $x\in(0,1)$ be the constant defined by Eq.~\eqref{eq:saturation}. Then, for every pure $\psi_1$, there is a pure $\psi_2$ orthogonal to $\psi_1$ such that
\begin{equation}
 x\psi_1+(1-x)\psi_2\in\F.
 \label{eq:ranktwo}
\end{equation}
\end{lemma}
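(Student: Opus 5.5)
The plan is to reduce the statement to the qubit classification~\eqref{eq:qubit}, applied through Lemma~\ref{lem:restriction}, on a two-dimensional subspace $W\ni\ket{\psi_1}$ that contains a free state. Suppose such a $W=\spn\{\ket{\psi_1},\ket{\psi_2}\}$ with $\ket{\psi_2}\perp\ket{\psi_1}$ and $\F_W\ne\varnothing$ exists. By Lemma~\ref{lem:restriction}, $\F_W$ is the free set of a nontrivial totally ordered qubit theory, so $\F_W=\{\rho\in\D_W:\norm{\rho-P_W/2}_1\le t\}$ for some $t\in[0,1)$. The largest value of $\bra{\psi_1}\mu\ket{\psi_1}$ over this trace-norm ball is $(1+t)/2$, attained at $\tfrac{1+t}{2}\psi_1+\tfrac{1-t}{2}\psi_2$.

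We still need to match this with the global constant $x$. Since $\F_W\subseteq\F$, Eq.~\eqref{eq:saturation} gives $(1+t)/2\le x$. Conversely, the function $\max_{\mu\in\F_W}F(\rho,\mu)$ is a monotone of the restricted totally ordered theory, so it is constant on the pure states of $W$. I expect this constant to equal $x$; the step I would check most carefully is that the qubit classification of Ref.~\cite{Datta} ties the saturation value inside $W$ to the radius $t$ exactly in this way. Granting it, $x=(1+t)/2$, and then $x\psi_1+(1-x)\psi_2\in\F_W\subseteq\F$ for every $\ket{\psi_2}\in W$ orthogonal to $\ket{\psi_1}$. For $d=2$ we take $W=\mathbb C^2$, which contains free states by assumption, and we are done.

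For $d\ge3$ it remains to rule out the case in which \emph{every} $W=\spn\{\ket{\psi_1},\ket{\psi_2}\}$ with $\ket{\psi_2}\perp\ket{\psi_1}$ has $\F_W=\varnothing$. (A continuity argument with the dichotomy~\eqref{eq:dichotomy} shows this is all-or-nothing, but it is not needed.) By compactness of $\F$ and Eq.~\eqref{eq:saturation}, pick $\mu^*\in\F$ with $\bra{\psi_1}\mu^*\ket{\psi_1}=x$. Since $\lambda_{\max}(\mu^*)\le x$, the vector $\ket{\psi_1}$ is a top eigenvector of $\mu^*$. Hence $\mu^*=x\psi_1+(1-x)\nu$ with $\nu$ a state supported on $\ket{\psi_1}^\perp$. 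Let $\ket{\phi}$ be an eigenvector of $\nu$ with eigenvalue $p>0$ and set $W=\spn\{\ket{\psi_1},\ket{\phi}\}$, so that $\F_W=\varnothing$ by assumption.

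For any $\tau\in\D_W$, repeat the argument from the proof of Lemma~\ref{lem:restriction}. A conversion $\gamma_\eta\xto\tau$ from the full-rank state~\eqref{eq:gamma} would, by~\eqref{eq:supportlemma}, map a free state into the empty set $\F_W$. Totality therefore forces $\tau\xto\gamma_\eta$. Applying this to the continuous monotone $f(\rho):=\max_{\mu\in\F}F(\rho,\mu)$ and letting $\eta\downarrow0$ gives $f(\tau)\ge f(\psi)=x$ for a pure $\psi$; I would apply the lower-bound direction only in this form.

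This alone is not a contradiction, so the main obstacle is to turn the lower bound into a clash with the saturation constraint. My first attempt is to compare upper and lower bounds on fidelity with the specific free state $\mu^*$. Take the commuting state $\tau_a=a\psi_1+(1-a)\phi$. Then
\begin{equation}
 F(\tau_a,\mu^*)=\bigl(\sqrt{ax}+\sqrt{(1-a)(1-x)p}\bigr)^2 ,
\end{equation}
which is maximized at a value equal to $x+(1-x)p>x$. If one can also show $f\le x$ on all of $\D_W$ when $\F_W=\varnothing$, we obtain a contradiction. A plausible route to this upper bound is that every state of $\D_W$ is a mixture of pure states of $W$, combined with a concavity-type control of $\max_\mu F$ along the reverse conversion. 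If this direction fails, I would instead work with $R$: use $R\ge r$ on $\D_W$ (from $\tau\xto\gamma_\eta$) and aim to show that $\tau_a$ is strictly closer than $r$ to $\mu^*$ for a suitable $a$. This is the step I expect to require the most care.
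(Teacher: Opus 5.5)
\textbf{The gap is in your first case, for $d\ge3$.} For $d=2$ you are fine, since $\F_W=\F$. For $d\ge3$, the step you grant is that the saturation value $(1+t)/2$ of the ball $\F_W$ equals the global constant $x$. The qubit classification cannot deliver this. It describes only $\F_W=\F\cap\D_W$ and says nothing about free states outside $W$, so all you get is $(1+t)/2\le x$. Equality would mean that the maximum in Eq.~\eqref{eq:saturation} for $\psi_1$ is attained by a free state supported on a two-dimensional subspace through $\psi_1$. By $\lambda_{\max}\le x$ that is literally the statement $x\psi_1+(1-x)\psi_2\in\F$. So granting it assumes the lemma, in fact a stronger version holding for every such $W$. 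Without equality you only obtain $\frac{1+t}{2}\psi_1+\frac{1-t}{2}\psi_2\in\F$ with possibly smaller weight. That is not the statement, and it is not enough downstream: the qutrit argument needs the exact weight $x$, e.g.\ in Eq.~\eqref{eq:zerofidelity}.

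\textbf{The missing idea is the paper's limit-channel argument.} It makes Lemma~\ref{lem:restriction}, the qubit classification and your case split unnecessary. Write $\mu^*=x\psi_1+\sum_{j\ge2}\lambda_j\psi_j$ with $\lambda_2>0$, and compare the full-rank $\rho_\varepsilon$ near $\psi_1$ with $\tau_\varepsilon=(1-2\varepsilon)\psi_1+2\varepsilon\psi_2$.
\begin{itemize}
\item A free $\Gamma$ with $\Gamma(\tau_\varepsilon)=\rho_\varepsilon$ would give $\bra{\psi_1}\Gamma(\mu^*)\ket{\psi_1}\ge x\frac{1-3\varepsilon}{1-2\varepsilon}+\lambda_2/2>x$, which is impossible. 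So totality gives $\Lambda_\varepsilon(\rho_\varepsilon)=\tau_\varepsilon$.
\item A limit point $\Lambda_0$ fixes $\psi_1$, and by Eq.~\eqref{eq:supportlemma} it maps all of $\D_d$ into $\D_{\spn\{\ket{\psi_1},\ket{\psi_2}\}}$.
\item Saturation applied to the free state $\Lambda_0(\mu^*)$ forces $\Lambda_0(\psi_j)=\psi_2$ whenever $\lambda_j>0$. Hence $\Lambda_0(\mu^*)=x\psi_1+(1-x)\psi_2\in\F$, with exactly the weight $x$.
\end{itemize}

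\textbf{Your second case has the monotonicity reversed, and fixing it closes that case.} Fidelity with the free set can only increase under free operations, so $\tau\xto\gamma_\eta$ gives $f(\tau)\le f(\gamma_\eta)\to x$. That is exactly the upper bound you were looking for. Combined with your own computation $\max_a F(\tau_a,\mu^*)=x+(1-x)p>x$, the case closes at once. However, this only shows $\F_W\neq\varnothing$ for $W=\spn\{\ket{\psi_1},\ket{\phi}\}$, which sends you back to the first case and the matching gap above.
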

\noindent We refer to the End Matter for the proof of the lemma.

We now prove Theorem~\ref{thm:quantum} for $d=3$ assuming for a contradiction that $\Ocal_{\F}$ totally orders $\D_3$. By Lemma~\ref{lem:ranktwo}, choose orthogonal pure states $a,b$ with $xa+(1-x)b\in\F$, and set $W_{12}=\spn\{\ket a,\ket b\}$. Choose $\ket{e_3}\perp W_{12}$. By Lemma~\ref{lem:ranktwo}, some $\ket{e_1}\perp \ket{e_3}$ satisfies $xe_3+(1-x)e_1\in\F$. Since $d=3$, $\ket{e_1}\in W_{12}$; complete it to an orthonormal basis $\{\ket{e_1},\ket{e_2}\}$ of $W_{12}$. Both $W_{12}$ and $W_{13}=\spn\{\ket{e_1},\ket{e_3}\}$ contain free states. By Lemma~\ref{lem:restriction} and~\eqref{eq:qubit}, their restricted free sets are centered balls and hence invariant under all unitaries within the respective two-dimensional subspaces. More explicitly, Lemma~\ref{lem:restriction} shows that the resource theory restricted to each of these subspaces is a nontrivial totally ordered qubit resource theory, so the qubit characterization~\eqref{eq:qubit} applies to each restriction. In particular,
\begin{equation}
 \sigma_{1j}:=xe_1+(1-x)e_j\in\F,\qquad j=2,3.
 \label{eq:twofree}
\end{equation}

For small $\varepsilon>0$, compare
\begin{align}
 \rho_\varepsilon&=(1-\varepsilon)e_1+\frac{\varepsilon}{2}(e_2+e_3),\nonumber\\
 \tau_\varepsilon&=(1-2\varepsilon)e_1+2\varepsilon e_2.
 \label{eq:qutritpair}
\end{align}
Suppose a free channel $\Gamma$ maps $\tau_\varepsilon$ to $\rho_\varepsilon$, and set $\alpha=\bra{e_1}\Gamma(e_1)\ket{e_1}$, $\beta=\bra{e_1}\Gamma(e_2)\ket{e_1}$. Since $\alpha,\beta\in[0,1]$, $\bra{e_1}\Gamma(\tau_\varepsilon)\ket{e_1}=\bra{e_1}\rho_\varepsilon\ket{e_1}$ gives 
\begin{equation}
    (1-2\varepsilon)\alpha+2\varepsilon\beta=1-\varepsilon.
\end{equation}
Using $\alpha,\beta\le1$ we obtain
\begin{align}
 \alpha\ge\frac{1-3\varepsilon}{1-2\varepsilon},\qquad &\beta\ge\frac12.
 \label{eq:bounds}
\end{align}
The free state $\sigma_{12}$ would consequently satisfy
\begin{equation}
 \bra{e_1}\Gamma(\sigma_{12})\ket{e_1}
 \ge x\frac{1-3\varepsilon}{1-2\varepsilon}+\frac{1-x}{2}>x
 \label{eq:forbidden}
\end{equation}
for small enough $\varepsilon$, contradicting~\eqref{eq:saturation}. Thus $\tau_\varepsilon\notto\rho_\varepsilon$, and totality supplies a free $\Lambda_\varepsilon$ with $\Lambda_\varepsilon(\rho_\varepsilon)=\tau_\varepsilon$.

Take a convergent subsequence as $\varepsilon\downarrow0$, with limit $\Lambda_0\in\Ocal_{\F}$. Since $\rho_\varepsilon$ is full rank and $\tau_\varepsilon$ is supported on $W_{12}$, the support constraint established after Eq.~\eqref{eq:supportlemma} gives
\begin{equation}
 \Lambda_0(\D_3)\subseteq\D_{W_{12}},\qquad \Lambda_0(e_1)=e_1.
 \label{eq:compression}
\end{equation}
For $j=2,3$, preservation of the free states~\eqref{eq:twofree} yields $\Lambda_0(\sigma_{1j})\in\F$, so Eq.~\eqref{eq:saturation} gives $\bra{e_1}\Lambda_0(\sigma_{1j})\ket{e_1}\le x$. Using $\sigma_{1j}=xe_1+(1-x)e_j$, linearity of $\Lambda_0$, and $\Lambda_0(e_1)=e_1$ from~\eqref{eq:compression}, this becomes
\begin{equation}
 x\ge x+(1-x)\bra{e_1}\Lambda_0(e_j)\ket{e_1}.
 \label{eq:zerofidelity}
\end{equation}
The overlaps $\bra{e_1}\Lambda_0(e_j)\ket{e_1}$ for $j=2,3$ therefore vanish. Positivity and~\eqref{eq:compression} force
\begin{equation}
 \Lambda_0(e_2)=\Lambda_0(e_3)=e_2.
 \label{eq:collapse}
\end{equation}
If $W_{23}=\spn\{\ket{e_2},\ket{e_3}\}$ contained a free state, Lemma~\ref{lem:restriction} would make $(e_2+e_3)/2$ free. Equation~\eqref{eq:collapse} would map it to the pure resource state $e_2$. Hence $\F_{W_{23}}=\varnothing$, whereas $\F_{W_{12}}\ne\varnothing$.

These two possibilities cannot coexist. Define
\begin{align}
 \ket{u_t}&=\cos\frac{\pi t}{2}\ket{e_3}+\sin\frac{\pi t}{2}\ket{e_1},\nonumber\\
 W_t&=\spn\{\ket{e_2},\ket{u_t}\},\qquad
 \omega_t=\frac{e_2+u_t}{2},\quad 0\le t\le1.
 \label{eq:path}
\end{align}
Then $W_0=W_{23}$ and $W_1=W_{12}$. Equation~\eqref{eq:dichotomy} implies
\begin{equation}
 R(\omega_t)\in\{0,r\},\qquad R(\omega_0)=r,\quad R(\omega_1)=0.
 \label{eq:contradiction}
\end{equation}
But $t\mapsto R(\omega_t)$ is continuous, and a continuous map from $[0,1]$ into $\{0,r\}$ must be constant. This contradiction proves Theorem~\ref{thm:quantum} for the qutrit case.

To prove the theorem for $d>3$, suppose for a contradiction that a nontrivial totally ordered resource theory exists for some $d>3$. We can use Lemmas~\ref{lem:restriction} and~\ref{lem:ranktwo} to induce a nontrivial totally ordered resource theory on a three-dimensional subspace. In particular, Lemma~\ref{lem:ranktwo} provides a rank-two free state. Embed its support in a three-dimensional subspace $V$. Lemma~\ref{lem:restriction} then produces a nontrivial totally ordered qutrit theory on $\D_V$, contradicting the result just proved. This completes the proof of Theorem~\ref{thm:quantum} for all $d \geq 3$.

\medskip

\emph{Total order for a classical resource theory.---}The above results show that total ordering is impossible for every nontrivial quantum resource theory in dimension $d\ge3$. We now show that this effect is genuinely quantum: on a classical $d$-level system, nontrivial totally ordered resource theories exist for every $d\ge2$. We give an explicit family for which a single scalar monotone completely characterizes all state conversions.

Let $\Delta_d=\{p\in\mathbb R^d:p_i\ge0,\ \sum_i p_i=1\}$, with column-stochastic matrices as classical channels. By a classical resource theory on a $d$-level system we mean a choice of free probability distributions in $\Delta_d$ together with stochastic maps that preserve the free set; distributions outside the free set are resource states.

For $d\ge2$ and $0<a\le1$, define the free-state set and the free-operation set, respectively, by
\begin{align}
 \F_a^{\mathrm{cl}}&:=\{p\in\Delta_d:p_1\ge a\},\nonumber\\
 \Ocal_a^{\mathrm{cl}}&:=\{T\text{ stochastic}:T(\F_a^{\mathrm{cl}})\subseteq\F_a^{\mathrm{cl}}\}.
 \label{eq:classicalfree}
\end{align}
For every $0<a\le1$ the free set is nonempty, compact, convex, and proper.

As we will now see, the resource theory defined by $\F_a^{\mathrm{cl}}$ and $\Ocal_a^{\mathrm{cl}}$ is totally ordered. A complete monotone is
\begin{equation}
 \begin{aligned}
 R_a(p)&:=\max\{a-p_1,0\},\\
 p\xto q&\ \Longleftrightarrow\ R_a(p)\ge R_a(q).
 \end{aligned}
 \label{eq:classicalorder}
\end{equation}
To prove this, first note that the monotone  $R_a$ can be written as the $\ell_1$-distance to the free set
\begin{equation}
 R_a(p)=\frac12\min_{f\in\F_a^{\mathrm{cl}}}\norm{p-f}_{\ell_1}
 \label{eq:tv}
\end{equation}
with $\ell_1$-norm $\norm{v}_{\ell_1}:=\sum_i|v_i|$.
For $p_1<a$ and $f\in\F_a^{\mathrm{cl}}$, we have $f_1\ge a>p_1$ and, by normalization, $\sum_{j=2}^d(p_j-f_j)=f_1-p_1$. Hence
\[
 \norm{p-f}_{\ell_1}
 =(f_1-p_1)+\sum_{j=2}^d|p_j-f_j|
 \ge 2(f_1-p_1)\ge2(a-p_1).
\]
The bound is attained by $f_1=a$ and $f_j=(1-a)p_j/(1-p_1)$ for $j\ge2$; for free $p$ the distance is zero. The $\ell_1$ norm is contractive under every stochastic matrix $T$, since for any vector $v$,
\[
\norm{Tv}_{\ell_{1}}=\sum_{i}\left|\sum_{j}T_{ij}v_{j}\right|\le\sum_{i,j}T_{ij}|v_{j}|=\sum_{j}|v_{j}|=\norm{v}_{\ell_{1}}.
\]
Together with $T(\F_a^{\mathrm{cl}})\subseteq\F_a^{\mathrm{cl}}$ for every free operation $T\in\Ocal_a^{\mathrm{cl}}$, this proves necessity in~\eqref{eq:classicalorder}.

For sufficiency, if $q\in\F_a^{\mathrm{cl}}$, the replacement map with output $q$ is free and, trivially, $p\xto q$ holds. If $q\notin\F_a^{\mathrm{cl}}$, then $q_1<a$, and the assumed inequality $R_a(p)\ge R_a(q)$ in~\eqref{eq:classicalorder} gives $p_1\le q_1<a$, so $p_1<1$. Set
\begin{equation}
 r_{p,q}:=\frac{q-p_1e_1}{1-p_1},\qquad
 Te_1=e_1,\quad Te_j=r_{p,q}\quad(j\ge2),
 \label{eq:classicalmap}
\end{equation}
where $e_j$ denotes the $j$th standard basis vector and the above inequalities guarantee that $r_{p,q}\in\Delta_d$. Since every column of $T$ is either $e_1$ or $r_{p,q}$, all entries of $T$ are nonnegative and every column sums to one; hence $T$ is column-stochastic, and
\begin{equation}
 Tp=p_1e_1+(1-p_1)r_{p,q}=q.
 \label{eq:attainment}
\end{equation}
For every $z\in\Delta_d$, the definition of the columns of $T$ gives
\[
 Tz=z_1e_1+\sum_{j=2}^d z_j r_{p,q}=z_1e_1+(1-z_1)r_{p,q}.
\]
Therefore $(Tz)_1=z_1+(1-z_1)(r_{p,q})_1\ge z_1$. Thus $T$ preserves $\F_a^{\mathrm{cl}}$ and is free. This proves the converse implication in~\eqref{eq:classicalorder}, and hence establishes that the classical resource theory is totally ordered.

\medskip

\emph{Conclusions.---}We have shown that total resource ordering exhibits a sharp quantum--classical separation. For every nontrivial quantum resource theory with a nonempty compact convex free-state set, total ordering is impossible in dimensions $d\ge3$: incomparable states are unavoidable, and no single monotone can completely characterize all deterministic state conversions. In contrast, classical systems admit nontrivial totally ordered resource theories in every dimension $d\ge2$. Thus, within the stated assumptions, total order is possible classically and for qubits, but impossible for higher-dimensional quantum systems.

The result concerns the full quantum state space and deterministic, unassisted transformations. It does not exclude total order on restricted state families or in asymptotic, catalytic, or postselected settings. Several directions remain open. It would be interesting to determine whether analogous no-go results hold in more general probabilistic theories~\cite{Barrett2007,TakagiRegula2019}, and under which conditions total ordering can re-emerge when the operational setting is enlarged. Such examples are already known: bipartite pure-state entanglement becomes totally ordered in the asymptotic regime by the entropy of entanglement~\cite{Bennett1996,HorodeckiReview}. Approximate catalytic coherence under dephasing-covariant operations is characterized by the relative entropy of coherence~\cite{Datta}, while correlated-catalytic thermodynamics under Gibbs-preserving operations is characterized by the nonequilibrium free energy when arbitrarily small output error and correlations are allowed~\cite{Shiraishi2021}. Another natural question is how many scalar monotones are minimally required to characterize convertibility once a single complete monotone is ruled out~\cite{Gour2005,Datta}.

\textit{Acknowledgments.}---We thank Chandan Datta, Ray Ganardi, Tulja Varun Kondra, and Aby Philip for discussions. This work was supported by the National Science Centre Poland (Grant No.\ 2022/46/E/ST2/00115 and 2024/55/B/ST2/01590), the Spanish Ministerio de Ciencia, Innovaci\'on y Universidades (grant No.\ PID2023-146758NB-I00, grant No.\ PID2024-160539NB-I00, and ``Severo Ochoa Programme for Centres of Excellence'' grant No.\ CEX2023-001347-S all funded by MICIU/AEI/10.13039/501100011033) and from Comunidad de Madrid (grant QUITEMAD-CM TEC-2024/COM-84).

\emph{AI~Use~Disclosure.} The derivation of the proof of Theorem \ref{thm:quantum} has greatly benefited from interactions with OpenAI's ChatGPT 5.6 Sol that were substantially revised by the authors in combination with their own independent ideas. ChatGPT 5.6 Sol and ChatGPT 6 Astra has also been used as an assistive tool for language editing, manuscript organization, and bibliographic checking. All AI-assisted suggestions were critically assessed, and the final proofs, references, statements, and conclusions were independently verified by the authors, who take full responsibility for the content of the manuscript.

\bibliography{literature}

\section*{End Matter}
\appendix
\setcounter{equation}{0}
\renewcommand{\theequation}{A\arabic{equation}}

In the following we will present the proof of Lemma~\ref{lem:ranktwo}. Fix a pure $\psi_1$. Compactness supplies a free state $\mu$ attaining the maximum in~\eqref{eq:saturation}. Since $\bra{\psi_1}\mu\ket{\psi_1}=x$ and $\lambda_{\max}(\mu)\le x$, the vector $\ket{\psi_1}$ is an eigenvector of $\mu$ with eigenvalue $x$. Choose an orthonormal eigenbasis such that
\begin{equation}
 \mu=x\psi_1+\sum_{j=2}^{d}\lambda_j\psi_j,\qquad \lambda_2>0.
 \label{eq:muspectral}
\end{equation}
Such a positive $\lambda_2$ exists because $x<1$. For small $\varepsilon>0$, let
\begin{align}
 \rho_\varepsilon&=(1-\varepsilon)\psi_1+
 \frac{\varepsilon}{d-1}\sum_{j=2}^{d}\psi_j,\nonumber\\
 \tau_\varepsilon&=(1-2\varepsilon)\psi_1+2\varepsilon\psi_2.
 \label{eq:ranktwopair}
\end{align}
Suppose a free operation $\Gamma\in\Ocal_{\F}$ satisfies $\Gamma(\tau_\varepsilon)=\rho_\varepsilon$. Define $a_1=\bra{\psi_1}\Gamma(\psi_1)\ket{\psi_1}$ and $a_2=\bra{\psi_1}\Gamma(\psi_2)\ket{\psi_1}$. Taking the $\psi_1$ overlap of $\Gamma(\tau_\varepsilon)=\rho_\varepsilon$ gives 
\begin{equation}
    (1-2\varepsilon)a_1+2\varepsilon a_2=1-\varepsilon.
\end{equation}
Since $0\le a_1,a_2\le1$, we obtain
\begin{equation}
 a_1\ge\frac{1-3\varepsilon}{1-2\varepsilon},\qquad a_2\ge\frac12.
 \label{eq:ranktwobounds}
\end{equation}
By linearity and the spectral decomposition~\eqref{eq:muspectral},
\[
 \bra{\psi_1}\Gamma(\mu)\ket{\psi_1}
 =x a_1+\sum_{j=2}^{d}\lambda_j
 \bra{\psi_1}\Gamma(\psi_j)\ket{\psi_1}.
\]
Every term in the sum is nonnegative because $\Gamma(\psi_j)$ is positive. Keeping only the $j=2$ term, for which the overlap is $a_2$, and using~\eqref{eq:ranktwobounds} gives
\begin{equation}
 \bra{\psi_1}\Gamma(\mu)\ket{\psi_1}
 \ge x\frac{1-3\varepsilon}{1-2\varepsilon}+\frac{\lambda_2}{2}>x
 \label{eq:ranktwoexclude}
\end{equation}
for sufficiently small $\varepsilon$. This contradicts~\eqref{eq:saturation}. Hence no free operation can map $\tau_\varepsilon$ to $\rho_\varepsilon$.

Totality consequently gives $\Lambda_\varepsilon\in\Ocal_{\F}$ with $\Lambda_\varepsilon(\rho_\varepsilon)=\tau_\varepsilon$. Take a convergent subsequence $\Lambda_{\varepsilon_n}\to\Lambda_0\in\Ocal_{\F}$. For every $n$, the input $\rho_{\varepsilon_n}$ is full rank on $\mathbb C^d$, while the output $\tau_{\varepsilon_n}$ is supported on $\spn\{\ket{\psi_1},\ket{\psi_2}\}$. Hence~\eqref{eq:supportlemma} gives $\Lambda_{\varepsilon_n}(\D_d)\subseteq\D_{\spn\{\ket{\psi_1},\ket{\psi_2}\}}$ for every $n$, and closedness of this state space preserves the inclusion in the limit. Moreover, $\rho_{\varepsilon_n}\to\psi_1$ and $\tau_{\varepsilon_n}\to\psi_1$; taking the limit in $\Lambda_{\varepsilon_n}(\rho_{\varepsilon_n})=\tau_{\varepsilon_n}$ therefore gives $\Lambda_0(\psi_1)=\psi_1$. Thus
\begin{equation}
 \Lambda_0(\D_d)\subseteq\D_{\spn\{\ket{\psi_1},\ket{\psi_2}\}},
 \qquad \Lambda_0(\psi_1)=\psi_1.
 \label{eq:ranktwolimit}
\end{equation}
Because $\Lambda_0(\mu)$ is free, Eq.~\eqref{eq:saturation} gives
\[
 \bra{\psi_1}\Lambda_0(\mu)\ket{\psi_1}\le x.
\]
On the other hand, linearity, Eq.~\eqref{eq:muspectral}, and $\Lambda_0(\psi_1)=\psi_1$ from~\eqref{eq:ranktwolimit} give
\[
 \bra{\psi_1}\Lambda_0(\mu)\ket{\psi_1}
 =x+\sum_{j=2}^{d}\lambda_j
 \bra{\psi_1}\Lambda_0(\psi_j)\ket{\psi_1}.
\]
Combining the two relations yields
\begin{equation}
 x\ge x+\sum_{j=2}^{d}\lambda_j
 \bra{\psi_1}\Lambda_0(\psi_j)\ket{\psi_1}.
 \label{eq:ranktwozero}
\end{equation}
Since every term in the sum in~\eqref{eq:ranktwozero} is nonnegative, the inequality can hold only if
\[
 \bra{\psi_1}\Lambda_0(\psi_j)\ket{\psi_1}=0
\]
for every $j$ with $\lambda_j>0$. For such a $j$, Eq.~\eqref{eq:ranktwolimit} shows that the state $\Lambda_0(\psi_j)$ is supported on $\spn\{\ket{\psi_1},\ket{\psi_2}\}$. Thus $\Lambda_0(\psi_j)=\psi_2$ for every $j$ with $\lambda_j>0$. Therefore
\begin{equation}
 \Lambda_0(\mu)=x\psi_1+(1-x)\psi_2\in\F,
 \label{eq:ranktwofinal}
\end{equation}
which proves Lemma~\ref{lem:ranktwo}.

\end{document}